\def\i{{\bf i}}
\def\j{{\bf j}}
\newcommand{\bea}{\begin{eqnarray}}
\newcommand{\eea}{\end{eqnarray}}
\def\bi{\begin{itemize}}
\def\ei{\end{itemize}}
\def\bc{\begin{center}}
\def\ec{\end{center}}
\def\C{\hbox{$\mit I$\kern-.7em$\mit C$}}
\def\R{\hbox{$\mit I$\kern-.6em$\mit R$}}
\def\ket#1{|#1\rangle}
\newcommand{\one}{\mbox{$1 \hspace{-1.0mm}  {\bf l}$}}
\def\tr{\mathrm{tr}}
\def\ket#1{\left| #1\right>}
\def\bra#1{\left< #1\right|}
\newcommand{\proj}[1]{\ket{#1}\bra{#1}}
\newtheorem{theorem}{Theorem}
\newtheorem{lemma}[theorem]{Lemma}
\begin{document}

\title{Local unitary equivalence of multipartite pure states}

\author{B. Kraus}

\affiliation{Institute for Theoretical Physics, University of
Innsbruck, Austria}

\begin{abstract}

Necessary and sufficient conditions for the equivalence of
arbitrary $n$--qubit pure quantum states under Local Unitary (LU)
operations are derived. First, an easily computable standard form
for multipartite states is introduced. Two generic states are shown
to be LU--equivalent iff their standard forms coincide. The
LU--equivalence problem for non--generic states is solved by
presenting a systematic method to determine the LU operators (if
they exist) which interconvert the two states.
\end{abstract}
\maketitle

Multipartite states occur in many applications of quantum
information, like one--way quantum computing, quantum error
correction, and quantum secret sharing \cite{Gothesis97,RaBr01}.
Furthermore, the theory of multipartite states plays also an
important role in other fields of physics which deal with many-body
systems \cite{AmFa08}. The existence of those practical and
abstract applications is due to the subtle properties of
multipartite entangled states. Thus, one of the main goals in
quantum information theory is to gain a better understanding of the
non--local properties of quantum states. Whereas the bipartite case
is well understood, the multipartite case is much more complex.
Even though a big theoretical effort has been undertaken where
several entanglement measures for multipartite states have been
introduced \cite{CoKu00}, different classes of entangled states
have been identified \cite{DuViCi00}, and a normal form of
multipartite states has been presented \cite{Ves}, we are far from
completely understanding the non--local properties of multipartite
states \cite{HoHo07}.

One approach to gain insight into the entanglement properties of
quantum states is to consider their interconvertability. That is,
given two states $\ket{\Psi}$, $\ket{\Phi}$ the question is whether
or not $\ket{\Psi}$ can be transformed into $\ket{\Phi}$ by local
operations \cite{HoHo07}. One particularly interesting case, which
is also investigated in this paper, is the LU-equivalence of
multipartite states. We say that a $n$--partite state, $\ket{\Psi}$
is LU--equivalent to $\ket{\Phi}$ ($\ket{\Psi}\simeq_{LU}
\ket{\Phi}$) if there exist local unitary operators, $U_1,\ldots,
U_n$, such that $\ket{\Psi}=U_1\otimes \cdots \otimes U_n
\ket{\Phi}$. Note that two states which are LU--equivalent are
equally useful for any kind of application and they posses
precisely the same amount of entanglement. This is why
understanding the interconvertability of quantum states by LU
operations is part of the solution to the more general problem of
characterizing the different types of entangled quantum states.

In order to solve this long--standing problem the so--called local
polynomial invariants have been introduced \cite{GrRo98}. However,
even though it is known that it is sufficient to consider only a
finite set of them, this complete finite set is known only for very
few simple cases.

Here, we derive necessary and sufficient conditions for the
existence of LU operations which transform two states into each
other. For generic states, states where non of the single qubit
reduced states is completely mixed, the conditions can be easily
computed. For arbitrary $n$--qubit states a systematic method to
determine the unitaries (in case they exist) which interconvert the
states is presented.

The sequel of the paper is organized as follows. First, we
introduce a standard form of multipartite states, which we use in
order to derive easily computable necessary and sufficient
conditions for the LU--equivalence of generic multipartite states.
Like in the bipartite case, it is shown that two generic states are
LU--equivalent iff their standard forms coincide. For non--generic
states it is shown that whenever one of the single qubit reduced
states is not completely mixed, the problem of LU--equivalence of
$n$--qubit states can be reduced to the problem of LU--equivalence
of $(n-1)$--qubit states. Then, a systematic method to determine
the local unitaries (if they exist) which interconvert two
arbitrary states is presented. It is shown that the states are
LU--equivalent iff there exists a solution to a finite set of
equations. The number of variables involved in those equations
depends on the entanglement properties of the states. The case with
the largest number of variables occurs for the sometimes called
maximally entangled states of $n$ qubits, where any bipartition of
$\lceil n/2 \rceil$ qubits is maximally entangled with the rest. It
is known however, that only for certain values of $n$ such states
exist \cite{Bra03}. The power of this method is illustrated by
considering several examples.

Throughout this paper the following notation is used. By $X,Y,Z$ we
denote the Pauli operators. The subscript of an operator will
always denote the system it is acting on, or the system it is
describing. The reduced states of system $i_1,\ldots i_k$ of
$\ket{\Psi}$ ($\ket{\Phi}$) will always be denoted by
$\rho_{i_1\ldots i_k}$ ($\sigma_{i_1\ldots i_k}$) resp., i.e.
$\rho_{i_1\ldots i_k}=\tr_{\neg i_1\ldots \neg i_k}(\proj{\Psi})$.
We denote by ${\bf i}$ the classical bit--string $(i_1,\ldots,
i_n)$ with $i_k\in\{0,1\}$ $\forall k\in \{1,\ldots, n\}$ and
$\ket{{\bf i}}\equiv\ket{i_1,\ldots, i_n}$ denotes the
computational basis. Normalization factors as well as the tensor
product symbol will be omitted whenever it does not cause any
confusion.

Let us start by introducing a unique standard form of multipartite
states (see also \cite{KrKr08}). Let $\ket{\Psi}$ be a $n$--qubit
state. As a first step we apply local unitaries, $U_i^1$ such that
all the single qubit reduced states of the state
$\ket{\Psi_t}=U_1^1\otimes \ldots U_n^1 \ket{\Psi}$ are diagonal in
the computational basis, i.e. $\tr_{\neg
i}(\proj{\Psi_t})=D_i=\mbox{diag}(\lambda_i^1,\lambda_i^2)$. We
call any such decomposition trace decomposition of the state
$\ket{\Psi}$. A sorted trace decomposition is then defined as a
trace decomposition with $\lambda_i^1\geq \lambda_i^2$. Note that
transforming a state into its sorted trace decomposition, which we
will denote by $\ket{\Psi_{st}}$ in the following, can be easily
done by computing the spectral decomposition of all the single
qubit reduced states. The sorted trace decomposition of a generic
state, $\ket{\Psi}$ with $\rho_i\neq \one$ $\forall i$ is unique up
to local phase gates. That is $U_1\ldots U_n \ket{\Psi_{st}}$ is a
sorted trace decomposition of $\ket{\Psi}$ iff (up to a global
phase, $\alpha_0$) $U_i= U_i(\alpha_i)\equiv \text{
diag}(1,e^{i\alpha_i })$. In order to make the sorted trace
decomposition of generic states unique we impose the following
condition on the phases $\alpha_i$, $i\in\{0,\ldots, n\}$. We write
$\ket{\Psi_{st}}=\sum_{i_1,\ldots i_n=0}^1
\lambda_{i_1,\ldots,i_n}\ket{i_1,\ldots,i_n}$, and define the set
$S=\{{\bf i}: \lambda_{{\bf i}}\neq 0\}$ and $\bar{S}$ denotes the
set of the linearly independent vectors in $S$. The global phase,
$\alpha_0$ is chosen to make $\lambda_{{\bf i_0}}$ real and
positive where ${\bf i_0}={\bf 0}$ in case $\lambda_{{\bf 0}}\neq
0$ else ${\bf i_0}$ denotes the first (in lexicographic order)
linearly dependent vector in $S$. After that, the $n$ phases are
chosen to make the coefficients $e^{i\alpha_0}\lambda_{{\bf i}}$
for ${\bf i}\in \bar{S}$ real and positive \footnote{If there are
less than $n$ linearly independent vectors in $S$, say $k$, then
$k$ phases can be defined like that, the other phases leave the
state invariant and can therefore be chosen arbitrarily.}. Since
all the phase gates, which do not leave the state invariant are
fixed in this way we have that $U_1\ldots U_n\ket{\Psi_s}$, where
$\ket{\Psi_s}$ denotes here and in the following the standard form
of $\ket{\Psi}$, has standard form iff $U_1\ldots
U_n\ket{\Psi_s}=\ket{\Psi_s}$. That is the standard form is unique.
If $\rho_i=\frac{1}{2}\one$, for some system $i$, the standard form
can be similarly defined \cite{KrKr08}, however it will not be
unique then. Due to the definition any state is LU--equivalent to
its standard form \footnote{Note that this standard form coincides
for the simplest case of two qubits with the Schmidt decomposition
\cite{NiCh00} and can be generalized to $d$--level systems.}.

We employ now the standard form to derive a criterion for the
LU--equivalence of generic multipartite states. First of all note
that $\ket{\Psi}\simeq_{LU}\ket{\Phi}$ iff
$\ket{\Psi_s}\simeq_{LU}\ket{\Phi_s}$. Using then that the standard
form is unique we obtain the following theorem.

\begin{theorem} Let $\ket{\Psi}$ be an $n$ qubit state with $\rho_i\neq \one$ $\forall i$. Then $\ket{\Psi}\simeq_{LU} \ket{\Phi}$
iff the standard form of $\ket{\Psi}$ is equivalent to the standard
form of $\ket{\Phi}$, i.e. $\ket{\Psi_s}=\ket{\Phi_s}$.
\end{theorem}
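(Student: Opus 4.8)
The plan is to establish both implications, with the reverse direction being immediate and the forward direction carrying all the content. For the reverse direction, if $\ket{\Psi_s}=\ket{\Phi_s}$ then, since the construction of the standard form guarantees $\ket{\Psi}\simeq_{LU}\ket{\Psi_s}$ and $\ket{\Phi}\simeq_{LU}\ket{\Phi_s}$, transitivity of LU--equivalence gives $\ket{\Psi}\simeq_{LU}\ket{\Phi}$ at once. For the forward direction I would first use the same transitivity to replace the original states by their standard forms: $\ket{\Psi}\simeq_{LU}\ket{\Phi}$ holds iff $\ket{\Psi_s}\simeq_{LU}\ket{\Phi_s}$, as already noted before the theorem. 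Thus it suffices to show that two generic states that are both in standard form and LU--equivalent must in fact be equal.

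Next I would pin down the allowed local unitaries. Write $\ket{\Psi_s}=U_1\otimes\cdots\otimes U_n\ket{\Phi_s}$. Taking the reduced state on qubit $i$ on both sides yields $\rho_i=U_i\sigma_i U_i^\dagger$, so $\rho_i$ and $\sigma_i$ are isospectral. Because both $\ket{\Psi_s}$ and $\ket{\Phi_s}$ are sorted trace decompositions, $\rho_i$ and $\sigma_i$ are both diagonal in the computational basis with non--increasing diagonal entries, and equal spectra then force $\rho_i=\sigma_i$. The genericity hypothesis $\rho_i\neq\one$ means the two eigenvalues of $\rho_i$ are distinct, so its eigenbasis is unique up to phases; consequently any $U_i$ with $U_i\rho_i U_i^\dagger=\rho_i$ must be diagonal, i.e. $U_i=U_i(\alpha_i)=\mathrm{diag}(1,e^{i\alpha_i})$ up to an irrelevant local phase. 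Hence the only freedom relating the two standard forms is a product of phase gates (together with a global phase), which is exactly the uniqueness--up--to--phase--gates statement recorded for the sorted trace decomposition.

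Finally I would invoke the phase--fixing conditions built into the standard form. Since $U_1\cdots U_n\ket{\Phi_s}$ differs from $\ket{\Phi_s}$ only by phase gates, it is again a sorted trace decomposition of $\ket{\Phi}$; but it equals $\ket{\Psi_s}$, which is in standard form. The defining property of the standard form --- that $U_1\cdots U_n\ket{\Phi_s}$ is in standard form iff $U_1\cdots U_n\ket{\Phi_s}=\ket{\Phi_s}$ --- then forces $U_1\cdots U_n\ket{\Phi_s}=\ket{\Phi_s}$, i.e. $\ket{\Psi_s}=\ket{\Phi_s}$, completing the argument.

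I expect the main obstacle to be the middle step: rigorously reducing the a priori arbitrary local unitaries to diagonal phase gates. Everything hinges on genericity guaranteeing distinct single--qubit eigenvalues, which makes each local eigenbasis unique and thereby eliminates all non--diagonal $U_i$; without $\rho_i\neq\one$ the local unitary on qubit $i$ could rotate within a degenerate eigenspace and the argument breaks, which is precisely why the non--generic case is treated separately later in the paper.
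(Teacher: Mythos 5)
Your proposal is correct and follows essentially the same route as the paper: the paper's proof is precisely the observation that $\ket{\Psi}\simeq_{LU}\ket{\Phi}$ iff $\ket{\Psi_s}\simeq_{LU}\ket{\Phi_s}$ combined with the uniqueness of the standard form for generic states. The only difference is that you explicitly verify the step the paper asserts without proof --- that isospectral, sorted, non-degenerate single-qubit reduced states force the connecting local unitaries to be phase gates --- which is a welcome expansion rather than a deviation.
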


Thus, similarly to the bipartite case, two generic states are
LU--equivalent iff their standard forms coincide, which can be
easily checked. Furthermore, if the states are LU--equivalent then
$\ket{\Psi}=U_1,\ldots,U_n\ket{\Phi}$ with $U_i=(U_s^i)^\dagger
V_s^i$, where $U_s^i$, $V_s^i$ denote the local unitaries such that
$\ket{\Psi_s}\equiv U_s^1\ldots U_s^n\ket{\Psi}$ and
$\ket{\Phi_s}\equiv V_s^1 \ldots V_s^n\ket{\Phi}$.

In order to study now the non--generic cases, we will rewrite the
necessary and sufficient condition derived above. For a generic
state, $\ket{\Psi}$ it is easy to verify that
$\ket{\Psi_s}=\ket{\Phi_s}$ iff there exists a bitstring ${\bf
k}=k_1,\ldots k_n$, local phase gates $U_i(\alpha_i)$, and a global
phase $\alpha_0$ s.t. \bea \label{LU} e^{i\alpha_0} \bigotimes_i
 U_i(\alpha_i)X_i^{k_i}\bar{W}_i\ket{\Psi}=\bigotimes_i \bar{V}_i
\ket{\Phi},\eea where $\bar{W}_i$ ($\bar{V}_i$) are local unitaries
which transform $\rho_i$ ($\sigma_i$) into a diagonal matrix. That
is $\bigotimes_i\bar{W}_i\ket{\Psi}$ and $ \bigotimes_i\bar{V}_i
\ket{\Phi}$ are trace decompositions of $\ket{\Psi}$ and
$\ket{\Phi}$ resp.. For generic states $k_i$ is chosen such that
the order of the eigenvalues of the single qubit reduced states of
$\bigotimes_i
 X_i^{k_i}\bar{W}_i\ket{\Psi}$ and $\bigotimes_i \bar{V}_i
\ket{\Phi}$ coincides. In order to check then whether or not there
exist phases $\alpha_i$ such that Eq. (\ref{LU}) is satisfied, we
make use of the following lemma. There, we will consider four $n$--
qubit states. The systems, each composed out of $n$ qubits will be
denoted by $A,B,C,D$ respectively. The $i$-th qubit of system $A$
will be denoted by $A_i$, etc. Furthermore, we will use the
notation $\ket{\chi_i}=(\ket{0110}-\ket{1001})_{A_i,B_i,C_i,D_i}$
and $P^i_{AC}=\sum_{\bf k} \ket{{\bf k}}\bra{{\bf k}{\bf
k}}_{A_1,C_1,\ldots A_{i-1},C_{i-1},A_{i+1},C_{i+1}\ldots,
A_n,C_n}$ and similarly we define $P^i_{BD}$ for systems $B,D$. For
a state $\ket{\Psi}$ we define $K_\Psi\equiv \{{\bf k} \mbox{ such
that }\bra{{\bf k}}\Psi\rangle=0\}$ and
$\ket{\Psi_{\bar{\alpha}_i}}=\ket{\Psi}+ e^{-i\bar{\alpha}_0}
\sum_{{\bf k}\in K_\Psi} e^{-i\sum_{i=1}^n \bar{\alpha}_i
k_i}\ket{{\bf k}}$ for some phases $\bar{\alpha}_i$ and
$\ket{\Psi_{\bf 0}}=\ket{\Psi}+\sum_{{\bf k}\in K_\Psi} \ket{{\bf
k}}$.

\begin{lemma}
\label{LemmaPhase} Let $\ket{\Psi}, \ket{\Phi}$ be $n$ qubit
states. Then, there exist local phase gates, $U_i(\alpha_i)$ and a
phase $\alpha_0$ such that
$\ket{\Psi}=e^{i\alpha_0}\bigotimes_{i=1}^n U_i(\alpha_i)
\ket{\Phi}$ iff there exist phases $\{\bar{\alpha}_i\}_{i=0}^{n}$
such that \bi \item[(i)] $|\bra {\bf i} \Psi_{\bf 0}\rangle|= |\bra
{\bf i} \Phi_{\bar{\alpha}_i})\rangle|$ $\forall {\bf i}$ and
\item[(ii)] $\bra{\chi}_i P^i_{AC} P^i_{BD} \ket{\Psi_{\bf
0}}_A\ket{\Psi_{\bf
0}}_B\ket{\Phi_{\bar{\alpha}_i}}_C\ket{\Phi_{\bar{\alpha}_i}}_D=0$
$\forall i\in \{1,\ldots ,n\}$.\ei

\end{lemma}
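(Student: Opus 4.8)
The plan is to characterize when two states differ by local phase gates and a global phase. The forward direction is straightforward: if $\ket{\Psi}=e^{i\alpha_0}\bigotimes_i U_i(\alpha_i)\ket{\Phi}$, then comparing coefficients in the computational basis gives $\bra{\bf i}\Psi\rangle = e^{i\alpha_0}e^{i\sum_j \alpha_j i_j}\bra{\bf i}\Phi\rangle$ for every bitstring $\bf i$. Since $U_i(\alpha_i)=\text{diag}(1,e^{i\alpha_i})$ acts as a pure phase, the moduli satisfy $|\bra{\bf i}\Psi\rangle|=|\bra{\bf i}\Phi\rangle|$ for all $\bf i$; taking $\bar\alpha_i=-\alpha_i$ should reproduce condition (i) once the zero-completion in the definitions of $\ket{\Psi_{\bf 0}}$ and $\ket{\Phi_{\bar\alpha_i}}$ is unwound. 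Condition (ii) will follow because the phases, being consistent, must satisfy the linear relations among nonzero coefficients; I would verify that the $\ket{\chi_i}$ projector isolates exactly these phase-consistency constraints.

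The substantive direction is the converse. Here the key idea is that a phase assignment $\{\alpha_i\}$ relating $\ket{\Psi}$ and $\ket{\Phi}$ exists if and only if the phases of the nonzero coefficients $\bra{\bf i}\Psi\rangle/\bra{\bf i}\Phi\rangle$ are consistent with a multiplicative structure $e^{i\alpha_0}\prod_j e^{i\alpha_j i_j}$. First I would establish that condition (i) guarantees equality of moduli on every basis vector, so the only freedom left lies in the relative phases on the support. The completion trick, padding the vanishing coefficients (those in $K_\Psi$) with the trial phases $\bar\alpha_i$, is designed so that the padded states have full support; this removes the degenerate cases where a coefficient is zero and hence carries no phase information, allowing a uniform treatment.

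The heart of the argument is decoding condition (ii). The state $\ket{\chi_i}=\ket{0110}-\ket{1001}$ on $A_iB_iC_iD_i$ is an antisymmetric combination, and the projectors $P^i_{AC}, P^i_{BD}$ enforce that the $A,C$ registers (and $B,D$ registers) agree on all qubits except the $i$-th. Contracting against $\ket{\Psi_{\bf 0}}_A\ket{\Psi_{\bf 0}}_B\ket{\Phi_{\bar\alpha_i}}_C\ket{\Phi_{\bar\alpha_i}}_D$ therefore produces a sum over pairs of coefficients differing only in position $i$, and the antisymmetry extracts precisely a cross-difference of products of coefficients. I expect this expression to vanish exactly when the phase acquired by flipping qubit $i$ is the same ($e^{i\alpha_i}$) across all configurations, i.e. when a consistent $\alpha_i$ exists. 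The plan is to expand the contraction explicitly, show it equals a weighted sum of terms of the form $\overline{\psi_{\bf k}}\,\psi_{\bf k'}\,\phi_{\bf k}\,\overline{\phi_{\bf k'}}$ where $\bf k, \bf k'$ differ only in the $i$-th bit, and argue that the vanishing of all such terms is equivalent to the separability of the phase into a product over the individual bit-flips, hence to the existence of the desired $U_i(\alpha_i)$.

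The main obstacle will be handling the vanishing coefficients correctly. Where $\bra{\bf i}\Phi\rangle=0$ the ratio of phases is undefined, and it is precisely to sidestep this that the padding by $\ket{\Psi_{\bf 0}}$ and $\ket{\Phi_{\bar\alpha_i}}$ is introduced; I would need to check carefully that the trial phases $\bar\alpha_i$ inserted at the zero positions do not manufacture spurious obstructions and that a solution of the padded equations descends to a genuine phase equivalence of the original states. A secondary technical point is verifying that the four-copy, antisymmetrized formulation in (ii) captures \emph{all} phase-consistency constraints simultaneously rather than only pairwise ones; I would reconcile this by showing that consistency of flips in each coordinate $i$ separately, together with the global phase $\bar\alpha_0$, suffices to reconstruct the full phase function on the support.
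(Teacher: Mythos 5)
Your overall route is the same as the paper's: reduce to full--support states via the padding construction (the descent you flag is handled exactly as you suspect, by noting that the projector $P=\sum_{{\bf k}\notin K_\Psi}\proj{{\bf k}}$ commutes with diagonal phase gates, so a phase--gate relation between the padded states projects down to one between the originals), then decode condition (ii) as per--qubit flip--phase consistency, and finally reconstruct the global phase function by telescoping over single--bit flips. The paper does precisely this, obtaining $e^{i\phi_{k_1\ldots k_n}}=e^{i\alpha_0}e^{i\sum_j\alpha_j k_j}$ with $\alpha_0=\phi_{1\ldots 1}-\sum\alpha_i$, so your worry that per--coordinate consistency might only capture pairwise constraints resolves exactly as you anticipate.

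The one step that would fail as literally written is your decoding of (ii). The contraction $\bra{\chi}_i P^i_{AC}P^i_{BD}\ket{\Psi_{\bf 0}}_A\ket{\Psi_{\bf 0}}_B\ket{\Phi_{\bar{\alpha}_i}}_C\ket{\Phi_{\bar{\alpha}_i}}_D$ is linear in each of the four kets, so no complex conjugates can appear; your terms $\overline{\psi_{\bf k}}\,\psi_{{\bf k}'}\,\phi_{\bf k}\,\overline{\phi_{{\bf k}'}}$ are not what the expansion produces, and demanding their vanishing can never be satisfied on full support (each is a product of nonzero numbers). The correct expansion couples \emph{two} flip--pairs rather than one: the coefficient of $\ket{k}\ket{l}$ in the residual registers is $\psi_{0k}\psi_{1l}\phi_{1k}\phi_{0l}-\psi_{1k}\psi_{0l}\phi_{0k}\phi_{1l}$ (with $0,1$ in slot $i$ and $k,l$ the remaining $n-1$ bits), and (ii) is the vanishing of all these antisymmetrized differences, i.e. the statement that $\psi_{0k}\phi_{1k}/(\psi_{1k}\phi_{0k})$ is independent of $k$. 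The comparison between the phases of $\psi$ and $\phi$ is thus effected by the antisymmetry across the pair $(k,l)$, not by conjugation within a single pair; only after invoking (i) to write $\psi_{\bf i}=e^{i\phi_{\bf i}}\phi_{\bf i}$ does this become $e^{i(\phi_{0k}-\phi_{1k})}=e^{-i\alpha_i}$ for all $k$, which is the constancy you need. Since you commit to expanding the contraction explicitly, this is a repairable slip rather than a structural one, and with it fixed your plan coincides with the paper's proof. A minor convention point: with the paper's definition of $\ket{\Phi_{\bar{\alpha}_i}}$ (padding phases $e^{-i\bar{\alpha}_0}e^{-i\sum_i\bar{\alpha}_i k_i}$), the forward direction takes $\bar{\alpha}_i=+\alpha_i$, not $-\alpha_i$.
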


The prove of this lemma will be presented in the appendix.


Let us now consider the non--generic case. Obviously, two arbitrary
states, $\ket{\Psi},\ket{\Phi}$, are LU--equivalent iff there exist
local unitaries $\bar{V}_k,\bar{W}_k$ a bit string ${\bf k}$ and
phases $\alpha_i$ such that Eq. (\ref{LU}) is fulfilled. We will
show now how $\bar{V}_k,\bar{W}_k$ can be determined by imposing
necessary conditions of LU--equivalence.

First of all, we note that for any state $\ket{\Psi}$ with
$\rho_i\neq \one$ for some system $i$, $k_i$ as well as $\bar{V}_i$
and $\bar{W}_i$ can be easily determined as follows. If
$\ket{\Psi}\simeq_{LU}\ket{\Phi}$ then all the reduced states must
be LU--equivalent, in particular
$D_i=\mbox{diag}(\lambda_1^i,\lambda_2^i)=\bar{W}_i \rho_i
\bar{W}_i^\dagger=\bar{V}_i \sigma_i \bar{V}_i^\dagger$, for some
unitaries $\bar{W}_i,\bar{V}_i$. Analogously to the generic case,
this equation determines $\bar{W}_i$ and $\bar{V}_i$ (and $k_i=0$)
uniquely up to a phase gate. Thus, for this case we have that
$\ket{\Psi}\simeq_{LU} \ket{\Phi}$ iff there exist two phases,
$\alpha_{i}$ and $\alpha_0$ and local unitaries $U_j$ such that
\bea \label{cond1} \phantom{,}_i\bra{l}\bar{W}_i\Psi_s\rangle
=e^{i(\Phi+\alpha_i l)}\bigotimes_{j\neq
i}U_{j}\phantom{,}_i\bra{l}\bar{V}_i\Phi_s\rangle, \eea where
$l\in\{0,1\}$ and $\bar{W}_i,$ and $\bar{V}_i$ are chosen such that
$D_i=\mbox{diag}(\lambda_1^i,\lambda_2^i)=\bar{W}_i \rho_i
\bar{W}_i^\dagger=\bar{V}_i \sigma_i \bar{V}_i^\dagger$. Hence, if
there is one system where the reduced state is not proportional to
the identity then we can reduce the problem of LU--equivalence of
$n$--qubit states to the LU--equivalence of $(n-1)$--qubit states.
This statement can be easily generalized to the case where more
than one single qubit reduced state is not completely mixed.


Let us now consider the more complicated case, where some
$\rho_i=\one$. There, it is obviously no longer possible to
determine $\bar{V}_i$,$\bar{W}_i$ by imposing the necessary
condition of LU--equivalence, $\rho_i=U_i\sigma_i U^\dagger_i$.
However, we will show next, which necessary condition can be used
in order to determine them. Before we do so, we explain the problem
which might occur if $\rho_i=\one$ by considering a simple example.
Let $\ket{\Psi}$ and $\ket{\Phi}$ denote two states with
$\rho_{12}=\sigma_{12}=\one-\lambda\proj{\Psi^-}$, for some
$\lambda\neq 0$. Then we find that $\rho_{12}=U_1 U_2 \sigma_{12}
U_1^\dagger U_2^\dagger$ iff $U_1=U_2$, which implies that
$\ket{\Psi}\simeq_{LU} \ket{\Phi}$ iff there exist local unitaries
$U_1,U_3,\ldots, U_n$ such that $\ket{\Psi}=U_1U_1\ldots
U_n\ket{\Phi}$. Thus, the unitary $U_2$ depends on $U_1$. Or,
stated differently, $\bar{W}_2$ (and $\alpha_2$) depends on $U_1$
in Eq. (\ref{LU}), where we set $V_1=V_2=\one$. In general we might
neither be able to determine the phase $\alpha_2$, nor $\bar{W}_2$
as a function of $U_1$ alone. However, the next lemma shows that
any $\bar{W}_k$ can be determined as a function of a few unitaries
and $\bar{V}_k$ can always be determined directly form the state
$\ket{\Phi}$. We will see that the number of unitaries which are
required to define $\bar{W}_k$ depends on the entanglement
properties of the state.

\begin{lemma} \label{Le1} If $\ket{\Psi}=U_1\ldots U_n\ket{\Phi}$ and
if there exist systems $i_1, \ldots i_l$ such that $\rho_{i_1,
\ldots i_l,k}\neq \rho_{i_1, \ldots i_l}\otimes \one$ then
$\bar{V}_k$ in Eq. (\ref{LU}) can be determined from the state
$\ket{\Phi}$ and $\bar{W}_k$ can be determined as a function of
$U_{i_1},\ldots U_{i_l}$.\end{lemma}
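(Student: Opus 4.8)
The plan is to pass to the reduced state on the correlated group and to read off both $\bar{V}_k$ and $\bar{W}_k$ from a single--qubit operator built from it. Write $I=\{i_1,\ldots,i_l\}$ and $U_I=\bigotimes_{j\in I}U_j$. Tracing out all systems outside $I\cup\{k\}$ in $\ket{\Psi}=U_1\ldots U_n\ket{\Phi}$ gives at once $\rho_{Ik}=(U_I\otimes U_k)\,\sigma_{Ik}\,(U_I\otimes U_k)^\dagger$. I would first note that the hypothesis is LU--invariant: if $\sigma_{Ik}=\sigma_I\otimes\one$ then conjugating by $U_I\otimes U_k$ gives $\rho_{Ik}=\rho_I\otimes\one$, so $\rho_{Ik}\neq\rho_I\otimes\one$ is equivalent to $\sigma_{Ik}\neq\sigma_I\otimes\one$, and one may work with whichever of the two reduced states is convenient.

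For a Hermitian operator $M$ on the systems $I$ define the single--qubit operator $R^\sigma(M)=\tr_I[(M\otimes\one)\,\sigma_{Ik}]$, and analogously $R^\rho(M)$. A short computation shows that $R^\sigma(M)$ is Hermitian and that, by cyclicity of the partial trace in the $I$ factor together with the covariance above, $R^\rho(M)=U_k\,R^\sigma(U_I^\dagger M U_I)\,U_k^\dagger$. If one had $\sigma_{Ik}=\sigma_I\otimes\frac12\one$, then $R^\sigma(M)=\tr(M\sigma_I)\,\frac12\one$ would be proportional to $\one$ for every $M$; the hypothesis $\sigma_{Ik}\neq\sigma_I\otimes\one$ therefore guarantees an $M$ with $R^\sigma(M)\not\propto\one$. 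Fixing a canonical ordering of the Hermitian (e.g.\ Pauli) operators on $I$ and letting $M_0$ be the first such $M$, the matrix $R^\sigma(M_0)$ is a Hermitian $2\times2$ operator with two distinct eigenvalues, hence diagonalized by a unitary unique up to a left phase gate $U_k(\alpha_k)$ and the exchange $X_k$ of its eigenvectors. I would define $\bar{V}_k$ to be this diagonalizer; since $M_0$ and $\sigma_{Ik}$ are built from $\ket{\Phi}$ alone, $\bar{V}_k$ is determined by $\ket{\Phi}$, as claimed.

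It then remains to obtain $\bar{W}_k$. Using the same $M_0$, consider $O_\rho\equiv R^\rho(U_I M_0 U_I^\dagger)$. By the covariance identity this equals $U_k\,R^\sigma(M_0)\,U_k^\dagger$, so it is again Hermitian with distinct eigenvalues, its eigenvectors being $U_k$ applied to those of $R^\sigma(M_0)$. Define $\bar{W}_k$ as its diagonalizer, using the same eigenvalue ordering chosen for $\bar{V}_k$. Crucially, $O_\rho$ is computed from the fixed state $\ket{\Psi}$ through $\rho_{Ik}$ and from $U_I M_0 U_I^\dagger$, so it depends only on $U_{i_1},\ldots,U_{i_l}$; hence so does $\bar{W}_k$, the value of $U_k$ never being needed. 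Comparing eigenvectors yields $\bar{W}_k U_k=\bar{V}_k$ up to exactly the residual freedom $U_k(\alpha_k)X_k^{k_k}$, which is precisely the relation $U_k(\alpha_k)X_k^{k_k}\bar{W}_k U_k=\bar{V}_k$ obtained by substituting $\ket{\Psi}=U_1\ldots U_n\ket{\Phi}$ into Eq.~(\ref{LU}). (When $\rho_k\neq\frac12\one$ one may already take $l=0$ and recover the single--qubit determination used earlier; the new content is the case $\rho_k=\frac12\one$.)

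I expect the main obstacle to be the gauge bookkeeping of this last step: one must verify that the freedom in diagonalizing a non--degenerate qubit operator, namely a diagonal phase on the left and the bit flip $X_k$ that reorders the two eigenvectors, matches exactly the factors $U_k(\alpha_k)$ and $X_k^{k_k}$ left free in Eq.~(\ref{LU}), so that fixing $\bar{V}_k,\bar{W}_k$ in this way neither discards a genuine solution nor imposes a spurious constraint. The only place where the hypothesis is essential is in producing a non--degenerate $R^\sigma(M_0)$; for a single qubit ``not proportional to $\one$'' already forces two distinct eigenvalues, so once such an $M_0$ is exhibited the rest is the covariance computation recorded above.
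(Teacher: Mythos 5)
Your proof is correct and is essentially the paper's own argument in basis-free packaging: your $R^\sigma(M)=\tr_I[(M\otimes\one)\,\sigma_{Ik}]$ reduces to the paper's $Y_{\i}^{\j}$ and $Z_{\i}^{\j}$ upon taking $M=\ket{\i}\bra{\j}+h.c.$ or $M=i\ket{\i}\bra{\j}+h.c.$ (whose spans coincide with all Hermitian operators on $I$, so existence of a suitable $M_0$ is the same observation), and your $O_\rho=R^\rho(U_I M_0 U_I^\dagger)$ equals, by cyclicity of the partial trace over $I$, exactly the right-hand side of Eq.~(\ref{Yi}). The remaining step --- diagonalizing the non-degenerate Hermitian qubit operators on both sides of the covariance identity, defining $\bar{V}_k$ from the $\ket{\Phi}$ side and $\bar{W}_k(U_{i_1},\ldots,U_{i_l})$ from the $\ket{\Psi}$ side, with residual freedom a phase gate and the bit flip $X^{i_k}$ --- is precisely the paper's conclusion $U_k=e^{i\alpha_0}\bar{W}_k^\dagger U(\alpha_k)X^{i_k}\bar{V}_k$.
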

\begin{proof}

Without loss of generality we assume $i_1=1, \ldots i_l=l$ and
write $\ket{\Psi}=\sum \ket{\i}_{1\ldots
,l}\ket{\Psi_{\i}}_{l+1\ldots ,n}$ and $\ket{\Phi}=\sum
\ket{\i}_{1\ldots ,l}\ket{\Phi_{\i}}_{l+1\ldots ,n}$, where
$\i=(i_1,\ldots,i_l)$. Since $\sigma_{1,\ldots,l,k}= \sum \ket{\i}
\bra{\j} \tr_{\neg k}(\ket{\Phi_{\i}} \bra{\Phi_{\j}})\neq
\sigma_{1,\ldots,l}\otimes \one$, there exist at least two tuples
$\i$ and $\j=(j_1,\ldots j_l)$ such that the $2\times 2$ matrix
$X_{\i}^{\j}\equiv \tr_{\neg k}(\ket{\Phi_{\i}}
\bra{\Phi_{\j}})\not\propto \one $. Thus, at least one of the two
hermitian operators $Y_{\i}^{\j}=X_{\i}^{\j}+(X_{\i}^{\j})^\dagger$
and $Z_{\i}^\j=i X_\i^\j-i(X_\i^\j)^\dagger$ is not proportional to
the identity. W. l. o. g. we assume that $\one \not\propto Y_\i^\j
= \tr_{\neg k}[(\ket{\i}\bra{\j}+h.c)\ket{\Phi}\bra{\Phi}]$. Using
that $\ket{\Psi}=U_1\ldots U_n\ket{\Phi}$ we have \bea \label{Yi}
U_k Y_\i^\j U_k^\dagger =\tr_{\neg k}[(\ket{{\bf i}}\bra{{\bf
j}}+h.c)\cdot U_1^\dagger \ldots U_l^\dagger
\ket{\Psi}\bra{\Psi}U_1 \ldots U_l].\eea

Since $Y_\i^\j$ is hermitian we can diagonalize it as well as the
right hand side of Eq (\ref{Yi}) and obtain $U_k \bar{V}^\dagger_k
D \bar{V}_k U_k^\dagger=\nonumber
\\\bar{W}^\dagger_k(U_1,\ldots U_l)  D(U_1,\ldots
U_l)  \bar{W}_k(U_1,\ldots U_l),$ which is true iff $D= X^{i_k}D
(U_1,\ldots U_l)X^{i_k} $, with $i_k\in\{0,1\}$ and $U_k=
e^{i\alpha_0}\bar{W}^\dagger_k(U_1,\ldots U_l) U(\alpha_k) X^{i_k}
\bar{V}_k,$ for some phases $\alpha_0,\alpha_k$. Note that Eq
(\ref{Yi}) must hold for any $\i,\j$. Note further that $\bar{V}_k$
is the unitary which diagonalizes $Y_\i^\j$ and can therefore be
determined directly from the state $\ket{\Phi}$. Thus, we have
$\ket{\Psi}=U_1\ldots U_n\ket{\Phi}$ iff there exists
$i_k\in\{0,1\}$, $\alpha_0$ and $\alpha_k$ such that
$e^{i\alpha_0}X^{i_k}U(\alpha_k)\bar{W}_k(U_1,\ldots,U_l)
\ket{\Psi}=U_1\ldots \bar{V}_k \ldots U_n\ket{\Phi}$. \end{proof}

Note that the proof of Lemma \ref{Le1} is constructive. The idea
was to impose the necessary condition for LU--equivalence given in
Eq. (\ref{Yi}) for any $l$--tuples $\bf{i},\bf{j}$. Since the
$2\times 2$ matrices occurring in this equation are hermitian, one
can, similarly to the previous cases, determine the unitaries
$\bar{V}_k, \bar{W}_k$ by diagonalizing these matrices. In contrast
to before we will find here, that $\bar{W}_k$ might depend on $U_1,
\ldots ,U_l$.

We use now Lemma \ref{Le1} to present a constructive method to
compute all local unitaries as functions of a few variables. If
some unitary, $U_i$ cannot be determined in this way, we write
$U_i=e^{-i\gamma_i Z_i} e^{-i\beta_i X_i} e^{-i\alpha_i Z_i}$ (up
to a phase). Then $\ket{\Psi}=\otimes_j U_j\ket{\Phi}$ iff
$e^{i\alpha_i Z_i} \bar{W}_i\ket{\Psi}=\otimes_{j\neq i}
U_j\ket{\Phi}$, where $\bar{W}_i=e^{i\beta_i X_i} e^{i\gamma_i
Z_i}$. That is, in this case we set $\bar{V}_i=\one$, $k_i=0$, and
$\bar{W}_i=e^{i\beta_i X_i} e^{i\gamma_i Z_i}$ in Eq (\ref{LU}). We
will say then that we consider $U_i$ as a variable.

The constructive method to compute now $\bar{V}_k$ and $\bar{W}_k$
in Eq (\ref{LU}) is as follows: (1) If there exists a system $i$
such that $\rho_i\not\propto \one $ compute $\bar{V}_i$,
$\bar{W}_i$ using that
$\bar{W}_i\rho_i\bar{W}_i^\dagger=\bar{V}_i\sigma_i\bar{V}_i^\dagger$
($k_i=0$). Furthermore, compute $\bar{V}_k$ and $\bar{W}_k(U_i)$
for any system $k$ with $\rho_{ik}\neq \rho_i\otimes \one $ using
Lemma \ref{Le1}. (2) For all systems $i$ for which $\rho_i \neq
\one$ apply the unitaries $\bar{W}_i$ ($\bar{V}_i$) to $\ket{\Psi}$
($\ket{\Phi}$) resp. and measure system $i$ in the computational
basis thereby reducing the number of systems (see Eq
(\ref{cond1})). After this step we have $\rho_i \propto \one$ $
\forall i$. Then we continue as follows: (3) Consider the two qubit
reduced states: (3a) There exist systems $i,j$ such that
$\rho_{ij}\neq \one $. W. l. o. g. we choose $i=1$, consider $U_1$
as variable, and set $\bar{V}_1=\one$, $k_1=0$ and
$\bar{W}_1=e^{i\beta_1 X_1} e^{i\gamma_1 Z_1}$. Then, compute
$\bar{V}_j$ and $\bar{W}_j(U_1)$ using Lemma \ref{Le1} for any
system $j$ with $\rho_{1j}\not\propto \one$. Let us denote by $J_2$
the set of systems for which $\rho_{1j}\not\propto \one$. (3b) If
there exists no system $i,j$ such that $\rho_{ij}\not\propto \one $
consider $U_1$ and $U_2$ as variables and set $\bar{V}_i=\one$,
$k_i=0$ and $\bar{W}_i=e^{i\beta_i X_i} e^{i\gamma_i Z_i}$, for
$i=1,2$. Furthermore, set $J_2=\{2\}$. (4) Consider the
three--qubit reduced states: (4a) If there exists a system $k$ such
that $\rho_{1 j k}\neq \rho_{1j}\otimes \one$ for some $j\in J_2$
compute $\bar{V}_k$ and $\bar{W}_k(U_1,U_j)$ using Lemma \ref{Le1}.
Determine for any system $k$ with $\rho_{1 j k}\neq
\rho_{1j}\otimes \one$ $\bar{V}_k$ and $\bar{W}_k(U_1,U_j)$ (if
they are not already determined). (4b) If there exists no system
$k$ such that $\rho_{1jk}\neq \rho_{1j}\otimes \one$ include $U_3$
as variable. (5) Continue in this way until all unitaries are
either determined as functions of a few unitaries, or are free
parameters. If at some point it is not possible to choose
$\bar{V}_k$ or $\bar{W}_k$ unitary, e.g. if the eigenvalues of the
operators occurring in Eq (\ref{Yi}) do not coincide, the states
are not LU--equivalent.

Once all unitaries, $\bar{V}_i$ are determined and all unitaries
$\bar{W}_i$ are determined as functions of a few variables, we have
that $\ket{\Psi}\simeq_{LU} \ket{\Phi}$ iff there exists a
bitstring ${\bf k}$ and phases $\{\alpha_i\}_{i=0}^n$, such that Eq
(\ref{LU}) is fulfilled. In order to check the existence of the
local phase gates in Eq. (\ref{LU}) (for some bitstring ${\bf k}$),
we use Lemma \ref{LemmaPhase}. It is important to note here that
the state on the right hand side of Eq. (\ref{LU}) is completely
determined, thus, the set $K_\Psi$ in Lemma \ref{LemmaPhase} can be
determined and therefore this lemma can be applied. The states are
LU--equivalent iff the conditions in Lemma \ref{LemmaPhase} are
fulfilled for some bitstring ${\bf k}$. Note that the unitaries,
$U_i$ which transform $\ket{\Phi}$ into $\ket{\Psi}$ are then given
by $U_i=\bar{W}^\dagger_i U(\alpha_i) X^{k_i}\bar{V}_i$ (up to a
global phase) \footnote{Note that the phases $\alpha_i$ can be
easily computed.}. These unitaries are uniquely determined up to
the symmetry of the state.

Note that a pure state has the property that
$\rho_{i_1,\ldots,i_l,k}=\rho_{i_1,\ldots,i_l}\otimes \one_k$ iff
for any outcome of any Von Neumann measurement on systems
$i_1,\ldots,i_l$, system $k$ is maximally entangled with the
remaining systems. Only in this case we have to add another unitary
as a variable. It is clear that two states, $\ket{\Psi},\ket{\Phi}$
with $\rho_{i_1,\ldots,i_l,k}=\rho_{i_1,\ldots,i_l}\otimes \one$
and $\sigma_{i_1,\ldots,i_l,k}\neq\sigma_{i_1,\ldots,i_l}\otimes
\one$ can neither be LU--equivalent nor posses the same
entanglement. Thus, the method presented above suggests that in
order to characterize the non--local properties of multipartite
states, one should first identify the class (as described above) to
which the state belongs to and then determine within this class the
entanglement of the state. It might well be, that the different
classes lead to different applications. For instance, the states
used for error correction, one way quantum computing and quantum
secret sharing have the property that all single qubit reduced
states are completely mixed.

Before we consider now some examples, let us mention that the worst
case, i.e. the case which involves the largest number of variables,
is the one where the reduced state of any bipartite splitting of
$\lceil n/2 \rceil$ systems versus the rest are maximally mixed. In
this case we have $\lceil n/2 \rceil$ unitaries as variables. Note
however, that there are very few instances, where those states do
exist \cite{Bra03}.

In order to illustrate the power of this method we consider first
the simplest examples of two and three qubit states. The standard
form of a two qubit state is $
\ket{\Psi}=\lambda_1\ket{00}+\lambda_2\ket{11}$. Thus, the method
above tells us that if $\lambda_1\neq \lambda_2$, i.e. $\rho_i\neq
\one$, then, $\ket{\Psi}\simeq_{LU}\ket{\Phi}$ iff the Schmidt
coefficients $\lambda_i$ are the same. For $\lambda_1=\lambda_2$ it
is straightforward to show that the unitaries, $U_i$, which are
obtained using the method above for the states $\ket{\Phi^+}\equiv
\ket{00}+\ket{11}$ and some LU--equivalent state
$V_1V_2\ket{\Phi^+}$ are $U_1=V_1W$ and $U_2=V_2W^\ast$ for any
unitary $W$. The reason why the unitaries $U_i$ are not completely
determined by $V_i$ is due to the symmetry of the state,
$\ket{\Phi^+}=W\otimes W^\ast \ket{\Phi^+}$ $\forall W$ unitary.

For three qubits the method is almost equally simple. First, we
transform both states into their trace decomposition. If non of the
reduced states is completely mixed, we simply compare their
standard forms (Theorem $1$). If there exists some $i$ such that
$\rho_i\neq\one$, we know that $U_i=U(\alpha_i)$. We measure system
$i$ in the computational basis and are left with two two--qubit
states (see Eq. (\ref{cond1})). In case those states are
LU--equivalent we apply the corresponding unitaries and use Lemma
\ref{LemmaPhase} to find out whether the three qubit states are
LU--equivalent or not. For the remaining case, where $\rho_i=\one$
$\forall i$ it can be easily shown that $\ket{\Psi}$ is
LU--equivalent to the GHZ--state,
$\ket{\Psi_0}=\ket{000}+\ket{111}$ \cite{Kr09b}. Even without using
this fact it can be easily shown that also in this case the method
presented above leads directly to the right unitaries (up to the
symmetry of the states) for two states which are LU--equivalent
(for details see \cite{Kr09b}).

With the same method the LU--equivalence classes of up to
$5$--qubit states are investigated in \cite{Kr09b}. We will show
there, for instance, that for $4$--qubit states with
$\rho_{ij}=\one$ for some $i,j$ (which is the hardest class of
states using the method presented above), the LU--equivalence class
is determined by only three parameters. Thus, also the entanglement
of those states is completely determined by the fact that system
$ij$ is maximally entangled to the other two qubits and those three
parameters, to which also an operational meaning will be given
\cite{Kr09b}. This example shows already that the method presented
here does not only give necessary and sufficient conditions for the
LU--equivalence of arbitrary multipartite states, but also leads to
a new insight into their entanglement properties.

Finally, let us note that the results presented above serve also as
a criterion of LU--equivalence for certain mixed and also
$d$--level states. For instance, if there exists at least one
non--degenerate eigenvalue of $\rho$ ($\sigma$) with corresponding
eigenvectors $\ket{\Psi}$ ($\ket{\Phi}$) resp., then
$\rho\simeq_{LU}\sigma$ implies that $\ket{\Psi}\simeq_{LU}
\ket{\Phi}$. Using the method presented here, all the unitaries,
which transform $\ket{\Psi}$ into $\ket{\Phi}$ can be determined
and therefore it is straightforward to check if one of them also
converts $\rho$ into $\sigma$.

In summary, a systematic way to show the LU--equivalence of
arbitrary multipartite pure states is presented. The results
derived here also lead to a new insight into the entanglement
properties of the multipartite states. Studying the different
classes specified here, allows one to identify new parameters
characterizing entanglement \cite{Kr09b}. In particular, for
generic states all the parameters occurring in the standard form
determine, like in the bipartite case, the entanglement contained
in the state.

The author would like to thank Hans Briegel for continuous support
and interest in this work and acknowledges support of the FWF
(Elise Richter Program).


\section{Appendix: Interconvertability by local phase gates}

In order to prove Lemma \ref{LemmaPhase} we will make use of the
following lemma, where we use the same notation as before.

\begin{lemma}
$\ket{\Psi}$ can be converted into $\ket{\Phi}$ by local unitary
phase gates iff there exist phases $\{\bar{\alpha}_i\}_{i=0}^{n}$
such that $\ket{\Psi_{\bf 0}}$ is converted into
$\ket{\Phi_{\bar{\alpha}_i}}$ by local unitary phase
gates.\end{lemma}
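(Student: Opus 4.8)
The plan is to translate everything into a single set of amplitude equations. A local phase gate $e^{i\alpha_0}\bigotimes_i U_i(\alpha_i)$ is diagonal in the computational basis and multiplies the amplitude of $\ket{{\bf k}}$ by the unit-modulus factor $e^{i\alpha_0}e^{i\sum_i \alpha_i k_i}$. Hence $\ket{\Psi}=e^{i\alpha_0}\bigotimes_i U_i(\alpha_i)\ket{\Phi}$ holds iff, for every bitstring ${\bf k}$, $\bra{{\bf k}}\Psi\rangle = e^{i\alpha_0}e^{i\sum_i \alpha_i k_i}\bra{{\bf k}}\Phi\rangle$; in particular such a gate preserves $|\bra{{\bf k}}\cdot\rangle|$ position by position, so phase-gate equivalence forces $K_\Psi=K_\Phi$. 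First I would dispose of the trivial case in which $\ket{\Psi}$ or $\ket{\Phi}$ is a single computational basis vector, where the statement is immediate, and henceforth assume both have at least two nonzero amplitudes, so that every amplitude on the support has modulus strictly below $1$. The key structural observation is then that filling $K_\Psi$ with ones makes the set of modulus-one positions of $\ket{\Psi_{\bf 0}}$ equal exactly to $K_\Psi$; likewise the modulus-one positions of $\ket{\Phi_{\bar\alpha_i}}$ are exactly $K_\Phi$, since the inserted amplitudes $e^{-i\bar\alpha_0}e^{-i\sum_i \bar\alpha_i k_i}$ all have modulus one.

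For the forward direction I would simply reuse the gate that already converts $\ket{\Psi}$ into $\ket{\Phi}$. Given $\ket{\Psi}=e^{i\alpha_0}\bigotimes_i U_i(\alpha_i)\ket{\Phi}$, I set $\bar\alpha_i=\alpha_i$ and $\bar\alpha_0=\alpha_0$ and verify $\ket{\Psi_{\bf 0}}=e^{i\alpha_0}\bigotimes_i U_i(\alpha_i)\ket{\Phi_{\bar\alpha_i}}$ position by position: on the common support (here $K_\Psi=K_\Phi$ by magnitude preservation) the equation is exactly the hypothesis, while on ${\bf k}\in K_\Psi$ the gate sends the inserted amplitude $e^{-i\alpha_0}e^{-i\sum_i\alpha_i k_i}$ of $\ket{\Phi_{\bar\alpha_i}}$ to $1$, matching the inserted $1$ of $\ket{\Psi_{\bf 0}}$. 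This is precisely why the definition of $\ket{\Phi_{\bar\alpha_i}}$ carries the pre-compensating phases rather than bare ones: they absorb the otherwise irrelevant action of the gate off the support.

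For the converse I would read the same equations backwards. Suppose a phase gate $g=e^{i\beta_0}\bigotimes_i U_i(\beta_i)$ converts $\ket{\Psi_{\bf 0}}$ into $\ket{\Phi_{\bar\alpha_i}}$ for some $\{\bar\alpha_i\}$. Because $g$ is diagonal and modulus preserving, the modulus-one positions of the two filled states coincide, and by the structural observation this gives $K_\Psi=K_\Phi=:K$. Restricting the conversion equation to ${\bf k}\notin K$, where the inserted terms are absent, yields $\bra{{\bf k}}\Psi\rangle = e^{i\beta_0}e^{i\sum_i\beta_i k_i}\bra{{\bf k}}\Phi\rangle$, while on ${\bf k}\in K$ both amplitudes vanish; hence the very same gate $g$ satisfies $\ket{\Psi}=g\ket{\Phi}$, which is the desired phase-gate conversion.

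The main obstacle, and the only nonroutine point, is the identification $K_\Psi=K_\Phi$ in the converse: it is what lets us strip off the inserted entries and recover an equation purely about the original supports. Everything hinges on the elementary fact that the inserted amplitudes are placed with modulus exactly one while genuine support amplitudes have modulus strictly less than one, together with the diagonality of phase gates; the remaining bookkeeping, namely matching coefficients on $K$ and on its complement, is routine once the trivial single-basis-vector case has been set aside.
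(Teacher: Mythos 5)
Your main argument follows essentially the same route as the paper: the forward direction reuses the converting gate with $\bar{\alpha}_i=\alpha_i$, and your converse --- restricting the amplitude equations to the complement of $K$ --- is exactly the paper's projector argument, where $P=\sum_{{\bf k}\notin K}\proj{{\bf k}}$ commutes with the diagonal gate. In fact you do one thing better: the paper's converse silently needs $K_\Psi=K_\Phi$ in order to conclude $P\ket{\Phi_{\bar{\alpha}_i}}=\ket{\Phi}$, and never justifies it; your modulus argument (inserted amplitudes have modulus exactly one, genuine support amplitudes of a normalized state with at least two nonzero terms have modulus strictly below one) supplies precisely this missing step.

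However, your disposal of the degenerate case is wrong, and this is a genuine flaw: when $\ket{\Psi}$ or $\ket{\Phi}$ is a single computational basis vector the statement is not ``immediate'' --- it is false. Take $\ket{\Psi}=\ket{0\cdots 0}$ and $\ket{\Phi}=\ket{1\cdots 1}$ and choose all $\bar{\alpha}_i=0$: then $\ket{\Psi_{\bf 0}}=\ket{\Phi_{\bar{\alpha}_i}}=\sum_{\bf k}\ket{{\bf k}}$, so the right-hand condition holds with the identity gate, yet no operator of the form $e^{i\alpha_0}\bigotimes_i U_i(\alpha_i)$, being diagonal, can map $\ket{1\cdots 1}$ to $\ket{0\cdots 0}$. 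The failure is exactly the collapse of your structural observation: for a basis vector the genuine amplitude also has modulus one, so the modulus-one positions no longer identify $K_\Psi$ and $K_\Phi$, and the identification $K_\Psi=K_\Phi$ --- which you correctly flagged as the crux of the converse --- breaks down. The honest fix is to exclude this case from the lemma's scope rather than declare it settled (it is harmless for the paper's application, since such states have $\rho_i\neq\frac{1}{2}\one$ for every $i$ and are handled by the standard-form comparison of Theorem 1, with the bits $k_i$ in Eq.~(1) aligning the supports before the phase-gate test is ever invoked); note that the paper's own proof shares the same unstated assumption and fails on the same example, so your write-up would be fully correct if the sentence ``where the statement is immediate'' were replaced by an explicit exclusion of basis product states.
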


\begin{proof}

If $\ket{\Psi}=e^{i\alpha_0}\bigotimes_{i=1}^n U_i(\alpha_i)
\ket{\Phi}$ then choosing $\bar{\alpha}_i=\alpha_i$ for $i\in
\{0,\ldots, n\}$ fulfills the condition. To prove the inverse
direction we assume that there exist phases
$\{\bar{\alpha}_i\}_{i=0}^{n}$ such that $\ket{\Psi_{\bf
0}}=e^{i\alpha_0}\bigotimes_{i=1}^n U_i(\alpha_i)
\ket{\Phi_{\bar{\alpha}_i}}$ for some phases $\{\alpha_i\}$.
Defining the projector $P=\sum_{{\bf k}\not\in K} \proj{{\bf k}}$
we have $P\ket{\Psi_{\bf 0}}=\ket{\Psi}$ and
$Pe^{i\alpha_0}\bigotimes_{i=1}^n U_i(\alpha_i)
\ket{\Phi_{\bar{\alpha}_i}}=e^{i\alpha_0}\bigotimes_{i=1}^n
U_i(\alpha_i) P\ket{\Phi_{\bar{\alpha}_i}}$ and therefore
$\ket{\Psi}=e^{i\alpha_0}\bigotimes_{i=1}^n U_i
(\alpha_i)\ket{\Phi}$.

\end{proof}

Let us now use the lemma above to prove Lemma \ref{LemmaPhase}.
\begin{proof}

Due to the lemma above it remains to show that for any state
$\ket{\psi}$ with $\bra{{\bf k}}\psi\rangle\neq 0$ $\forall {\bf
k}$ we have that $\ket{\psi}=e^{i\alpha_0}\bigotimes_{i=1}^n U_i
(\alpha_i) \ket{\phi}$  iff condition (i) and (ii) in Lemma
\ref{LemmaPhase} are satisfied. Note that Eq. (\ref{cond1}) is
equivalent to $\bra{0k}\psi\rangle \bra{1l}\psi\rangle
\bra{1k}\phi\rangle \bra{0l}\phi\rangle= \bra{1k}\psi\rangle
\bra{0l}\psi\rangle \bra{0k}\phi\rangle \bra{1l}\phi\rangle, $
where $0,1$ is acting on system $i$ and $k,l$ denote the
computational basis states of the remaining $n-1$ qubits.

Let us now prove the {\it only if} part: If
$\ket{\psi}=e^{i\alpha_0}\bigotimes_{i=1}^n U_i \ket{\phi}$ then
$\bra{{\bf i}}\psi\rangle=e^{i\phi_{{\bf i}}}\bra{{\bf
i}}\phi\rangle$, with $\phi_{{\bf i}}=\alpha_0+\sum_k \alpha_k
i_k$, which implies (i). Condition (ii) (for $i=1$) is then
equivalent to $e^{i(\phi_{0k}+\phi_{1l})}
x_{kl}=e^{i(\phi_{1k}+\phi_{0l})} x_{kl},$  where
$x_{kl}=\bra{0k}\phi\rangle \bra{1l}\phi\rangle \bra{1k}\phi\rangle
\bra{0l}\phi\rangle$. It is easy to see that this condition is
fulfilled since $e^{i(\phi_{0k}-\phi_{1k})}=e^{-i\alpha_1}$ $
\forall k$. In the same way one can show that the conditions for
$i\neq 1$ are fulfilled.

{\it If}: Condition (i) implies that $\bra{{\bf i}}\Psi\rangle=e^{i
\phi_{\bf i}}\bra{{\bf i}}\Phi\rangle$, for some phases $\phi_{\bf
i}$. Condition (ii) (for $i=1$) implies then that
$e^{i(\phi_{0k}-\phi_{1k})} =e^{i(\phi_{0l}-\phi_{1l})}$ $\forall
k,l$,  since $x_{kl}=\bra{0k}\phi\rangle \bra{1l}\phi\rangle
\bra{1k}\phi\rangle \bra{0l}\phi\rangle \neq 0$ $\forall k,l$.
Thus, $e^{i(\phi_{0k}-\phi_{1k})}$ must be independent of $k$ and
therefore, we have $e^{i(\phi_{0k}-\phi_{1k})}=e^{-i\alpha_1}$, or
equivalently, $e^{i\phi_{k_1,k}}=e^{i
(\alpha_1^{(k_1)}+\phi_{1k})}$, where $\alpha_1^{(0)}=-\alpha_1$
and $\alpha_1^{(1)}=0$. Similarly we have
$e^{i(\phi_{k_10k_3,\ldots, k_n}-\phi_{k_11k_3\ldots,
k_n})}=e^{-i\alpha_2}$ and therefore
$e^{i\phi_{k_1,k_2,k_3\ldots,k_n}}=e^{i
(\alpha_1^{(k_1)}+\alpha_2^{(k_2)}+\phi_{11k_3,\ldots,k_n})}$.
Continuing in this way we find $e^{i\phi_{k_1,\ldots
k_n}}=e^{i\alpha_0}e^{i\sum_j \alpha_j k_j}$, where
$\alpha_0=\phi_{1\ldots 1}-\sum \alpha_i$. Thus, we have
$\ket{\psi}=e^{i\alpha_0}\bigotimes_{i=1}^n U_i (\alpha_i)
\ket{\phi}$ with $U_i(\alpha_i)=\mbox{diag}(1,e^{i\alpha_i})$.
Using the lemma above, this implies that $\ket{\Psi}=e^{i\alpha_0}
\bigotimes_{i=1}^n U_i(\alpha_i) \ket{\Phi}$.

\end{proof}

\end{document}